\DeclareMathAlphabet{\varmathbb}{U}{pxsyb}{m}{n}
\newtheorem{proposition}{Proposition}%
\newtheorem{definition}{Definition}%
\newcommand{\MF}[1]{\mathop{#1\vrule height0.45ex width0pt}\nolimits}
\newcommand{\pd}[3][]{\mathchoice{\raise-0.5pt\hbox{$\partial$}%
\vphantom{\partial}_{\mkern-1.5mu#2}^{\mkern0.4mu#1}\mkern0.3mu}%
{\raise-0.5pt\hbox{$\partial$}%
\vphantom{\partial}_{\mkern-1.5mu#2}^{\mkern0.4mu#1}\mkern0.3mu}%
{\raise-0.5pt\hbox{$\scriptstyle\partial$}%
\vphantom{\partial}_{\mkern-1.7mu#2}^{\mkern0.1mu#1}\mkern0.1mu}%
{\raise-0.5pt\hbox{$\scriptscriptstyle\partial$}%
\vphantom{\partial}_{\mkern-1.7mu#2}^{\mkern0.1mu#1}\mkern0.1mu}#3}
\newcommand{\D}{\mathrm{d}\kern0.2pt}%
\newcommand{\ii}{\kern0.05em\mathrm{i}\kern0.05em}%
\newcommand{\E}[1]{\textrm{e}^{#1}}%
\renewcommand{\vec}[1]{\bm{#1}}%
\newcommand{\Cb}{\varmathbb{C}}%
\newcommand\mhat[1]{\widehat{#1}}
\def\transp{\mathsf{T}}
\begin{document}

\baselineskip=4.4mm

\makeatletter

\title{\bf On freely floating bodies trapping \\ time-harmonic waves in water \\
covered by brash ice}

\author{Nikolay Kuznetsov and Oleg Motygin}

\date{}

\maketitle

\vspace{-10mm}

\begin{center}
Laboratory for Mathematical Modelling of Wave Phenomena, \\ Institute for Problems
in Mechanical Engineering, Russian Academy of Sciences, \\ V.O., Bol'shoy pr. 61,
St. Petersburg 199178, Russian Federation \\ E-mail:
nikolay.g.kuznetsov@gmail.com\,, o.v.motygin@gmail.com
\end{center}

\begin{abstract}
A mechanical system consisting of water covered by brash ice and a body freely
floating near equilibrium is considered. The water occupies a half-space into which
an infinitely long surface-piercing cylinder is immersed, thus allowing us to study
two-dimensional modes of the coupled motion which is assumed to be of small
amplitude. The corresponding linear setting for time-harmonic oscillations reduces
to a spectral problem whose parameter is the frequency. A constant that
characterises the brash ice divides the set of frequencies into two subsets and the
results obtained for each of these subsets are essentially different.

For frequencies belonging to a finite interval adjacent to zero, the total energy of
motion is finite and the equipartition of energy holds for the whole system. For
every frequency from this interval, a family of motionless bodies trapping waves is
constructed by virtue of the semi-inverse procedure. For sufficiently large
frequencies outside of this interval, all solutions of finite energy are trivial.
\end{abstract}

\setcounter{equation}{0}

\section{Introduction}

This paper continues the rigorous study (initiated in \cite{NGK10}) of a freely
floating rigid bodies trapping time-harmonic waves in an inviscid, incompressible,
heavy fluid, say water (see also \cite{KM,KM1,KM2} and \cite{K1}). We consider the
infinitely deep water in irrotational motion bounded from above by a free surface
unbounded in all horizontal directions, but unlike the cited papers dealing with the
open surface, we assume here that it is totally covered with the brash ice. The body
is supposed to be an infinitely long cylinder which allows us to consider
two-dimensional modes orthogonal to cylinder's generators. We also assume that the
body is surface-piercing and unaffected by all external forces (for example due to
constraints on its motion) except for gravity. The motion of the whole system is
supposed to be of small amplitude near equilibrium, and so the linear model
developed by John \cite{John1} is used to describe the coupled motion of water and
body. However, the free-surface boundary condition must be amended to take into
account the presence of the brash ice covering the water. Such a condition was
proposed by Peters \cite{P}, who considered the brash ice as an infinitely thin mat
whose particles do not interact; that is, the only forces acting on it are those due
to gravity and the pressure of the water from below (see also \cite{GS}).

Our aim is twofold: first, we apply the so-called semi-inverse procedure for the
construction of motionless two-dimensional bodies each trapping a time-harmonic
mode; that is, the water covered by the brash ice and the body oscillate at the same
frequency, whereas the energy of water and ice motion is finite. We recall that a
trapped mode in two and three dimensions describes a free oscillation of the system
that has finite energy (see \cite{LM}, p.~17). This should be distinguished from edge
waves and waves trapped around an array of cylinders.

It should be mentioned that the same semi-inverse procedure was used by Kuzne\-tsov
\cite{NGK10} for obtaining two-dimensional trapping bodies that are motionless in
the open water (it is outlined in \S 4.1). However, there is an essential
distinction between the case of the open water and the water covered by the brash
ice. Namely, no restriction was imposed on the trapping frequencies in the former
case. On the contrary, trapping frequencies are bounded by a constant characterising
the brash ice and this bound is essential in the latter case. Indeed, our second
result shows that if a solution of the problem has finite energy and the frequency
exceeds a certain bound, then this solution is trivial. Generally, this bound is
larger than the constant mentioned above, but these bounds may coincide for some
bodies. It is essential that this result is valid for all bodies.

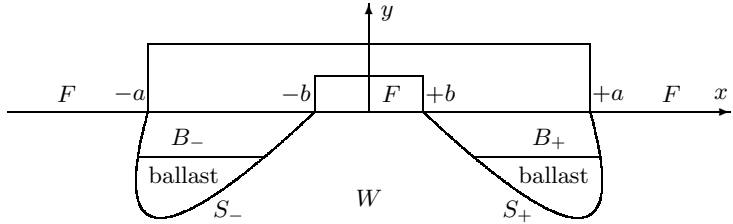
\begin{figure}
\begin{center}
\unitlength=0.6mm

\special{em:linewidth 0.4pt}

\linethickness{0.4pt}

\begin{picture}(163.00,62.00)(-3,90)

\put(28.0,130.00){\line(0,1){15.00}}

\put(65.0,130.00){\line(0,1){8.00}}

\put(89.0,130.00){\line(0,1){8.00}}

\put(126.0,130.00){\line(0,1){15.00}}

\put(28.0,145.00){\line(1,0){98.00}}

\put(65.0,138.00){\line(1,0){24.00}}

\put(100.0,120.00){\line(1,0){28.00}}

\put(26.0,120.00){\line(1,0){28.00}}

\put(-3,130.00){\vector(1,0){160.00}}

\put(77.00,130.00){\vector(0,1){24.00}}

\bezier{464}(28.00,130.00)(15.00,83.00)(65.00,130.00)

\multiput(28,130)(-6,-6){5}{\line(-1,-1){5.1}}

\multiput(126,130)(6,-6){5}{\line(1,-1){5.1}}

\bezier{464}(89.00,130.00)(139.00,83.00)(126.00,130.00)

\put(10.00,134.00){\makebox(0,0)[cc]{{\small $F$}}}

\put(144.00,134.00){\makebox(0,0)[cc]{{\small $F$}}}

\put(81.00,152.00){\makebox(0,0)[cc]{{\small $y$}}}

\put(92.8,134.00){\makebox(0,0)[cc]{{\small $+b$}}}

\put(130.00,134.00){\makebox(0,0)[cc]{{\small $+a$}}}

\put(24.00,134.00){\makebox(0,0)[cc]{{\small $-a$}}}

\put(60.8,134.00){\makebox(0,0)[cc]{{\small $-b$}}}

\put(37.00,124.00){\makebox(0,0)[cc]{{\small $B_-$}}}

\put(36.00,116.00){\makebox(0,0)[cc]{{\small ballast}}}

\put(118.00,116.00){\makebox(0,0)[cc]{{\small ballast}}}

\put(117.00,124.00){\makebox(0,0)[cc]{{\small $B_+$}}}

\put(46.00,108.00){\makebox(0,0)[cc]{{\small $S_-$}}}

\put(110.00,108.00){\makebox(0,0)[cc]{{\small $S_+$}}}

\put(155.00,134.00){\makebox(0,0)[cc]{{\small $x$}}}

\put(77.00,111.00){\makebox(0,0)[cc]{{\small $W$}}}

\put(82.00,134.00){\makebox(0,0)[cc]{{\small $F$}}}

\end{picture}

\end{center}
\vspace{-8mm} \caption{A definition sketch of the cylinder cross-section with
immersed parts denoted by $B_-$ and $B_+$; their wetted boundaries are $S_-$ and
$S_+$ respectively. The cross-section of the infinitely thin layer of the brash ice
covering the free surface is denoted by $F$; it consists of three parts two of which
lie on the $x$-axis outside $|x| > a$ and the third one is between $x = -b$ and $x =
+b$; $W$ is the cross-section of the water domain.} \vspace{-2mm}
\end{figure}

\section{Statement of the problem}

Let the Cartesian coordinate system $(x,y)$ in a plane orthogonal to the generators
of a freely floating infinitely long cylinder be chosen so that the $y$-axis is
directed upwards, whereas the mean free surface of the water intersects this plane
along the $x$-axis, and so the cross-section $W$ of the water domain is a subset of
$\mathbb{R}^2_- = \{ x \in \mathbb{R}, \, y < 0 \}$. Let $\mhat{B}$ denote the
bounded two-dimensional domain whose closure is the cross-section of a floating
cylinder in its equilibrium position (see figure~1).

We suppose that $\mhat{B} \setminus \overline{\mathbb{R}^2_-}$\,---\,the part of the
body located above the surface of water covered by the brash ice\,---\,is a nonempty
domain, whereas the immersed part $B = \mhat{B} \cap \mathbb{R}^2_-$ is the union of
a finite number of domains (in particular, it may be a single domain). Thus, $D =
\mhat{B} \cap \partial \mathbb{R}^2_-$ consists of the same number of nonempty
intervals of the $x$-axis; see figure 1, where $D = \{ x \in (-a, -b) \cup (b, a) ,
\, y=0 \}$ in the case of two immersed parts. We suppose that $W$ is $\mathbb{R}^2_-
\setminus \skew2\overline{B}$. Furthermore, we assume that $W$ is a Lipschitz
domain, and so the unit normal $\vec{n}$ pointing to the exterior of $W$ is defined
almost everywhere on $\partial W$. Finally, we denote by $S = \partial \mhat{B} \cap
\mathbb{R}^2_-$ the wetted curve (the number of its components is equal to the
number of immersed domains), whereas $F = \partial \mathbb{R}^2_- \setminus \skew2
\overline {D}$ is the infinitely thin layer of the brash ice covering the free
surface at rest.

To describe the small-amplitude coupled motion of the system it is standard to apply
the linear setting, in which case the first-order unknowns are used. These are the
velocity potential $\MF{\Phi} (x,y;t)$ and the vector column $\MF{\vec{q}} (t)$
describing the motion of the body. This vector has the following three components:

\vspace{1mm}

\noindent $\bullet$ $q_1$ and $q_2$ are the displacements of the centre of mass in
the horizontal and vertical directions respectively from its rest position $\bigl(
x^{(0)}, y^{(0)} \bigr)$;

\noindent $\bullet$ $q_3$ is the angle of rotation about the axis that goes through
the centre of mass orthogonally to the $(x,y)$-plane (the angle is measured from the
$x$- to the $y$-axis).

\vspace{1mm}

On the surface $F$, the brash ice is characterised by a non-negative function
$\sigma$ equal to the ratio of the local area density of the brash ice to the
constant volume density of the water. The values of $\sigma$ are less than or equal
to one, and we assume that $\sigma$ is constant throughout $F$; the value of this
constant is defined by the water salinity.

We omit relations governing the general time-dependent motion (see the details in
\cite{NGK10}) and turn directly to the time-harmonic oscillations of the system
for which purpose we use the ansatz
\begin{equation}
 \bigl( \MF{\Phi} (\vec{x},y,t), \vec{q}(t) \bigr) = {\rm Re} \bigl\{ \E{-\ii\omega t} 
 \bigl( \MF{\varphi} (\vec{x},y), \ii \vec{\chi} \bigr) \bigr\} ,
 \label{eq:ansatz}
\end{equation}
where $\omega > 0$ is the radian frequency, $\varphi \in \MF{H^1_{loc}}(W)$ is a
complex-valued function bounded at infinity and $\vec{\chi} \in \Cb^3$. Then the
governing relations for $\bigl( \varphi, \vec{\chi} \bigr)$ are as follows:
\begin{gather}
 \nabla^2 \varphi = 0 \quad \mbox{in} \ W ;
 \label{eq:1}\\
 \pd{y} \varphi = \nu ( \varphi + \sigma \pd{y} \varphi ) \quad \mbox{on} \ F , 
 \quad \mbox{where} \ \nu = \omega^2 / g ;
 \label{eq:2}\\
 \pd{\vec{n}} \varphi = \omega \, \vec{\chi}^\transp \vec{N} \ \Big(\!\! = \omega 
 \sum_1^3 N_j \chi_j \Big) \quad \mbox{on} \ S ;
 \label{eq:4}\\
 \nabla \varphi \to 0 \quad \mbox{as} \ y \to -\infty ;
 \label{eq:3}\\
 \omega^2 \bm{E} \vec{\chi} = - \omega \int_{S} \varphi \vec{N} \,
 \D{}s + g \, \bm{K} \vec{\chi} . \label{eq:5}
\end{gather}
Here $\nabla = (\pd{x}, \pd{y})$ is the spatial gradient and $g > 0$ is the
acceleration due to gravity that acts in the direction opposite to the $y$-axis;
$\vec{N} = (N_1, N_2, N_3)^\transp$ (the operation $^\transp$ transforms a vector
row into a vector column and vice versa), where $(N_1, N_2)^\transp = \vec{n}$, $N_3
= \left( x - x^{(0)}, y - y^{(0)} \right)^\transp \times \vec{n}$ and $\times$
stands for the vector product. In the equations of the body motion \eqref{eq:5}, the
$3\!\times\!3$ matrices are as follows:
\begin{equation}
 \MF{\bm{E}} =
\begin{pmatrix}
 I^M & 0 & 0 \\
 0 & I^M & 0 \\
 0 & 0 & I^M_2
\end{pmatrix} \quad {\rm and} \quad \MF{\bm{K}} =
\begin{pmatrix}
 0 & 0 & 0 \\
 0 & I^D & I^D_x \\
 0 & I^D_x & I^D_{xx} + I^S_y
\end{pmatrix} .
\label{eq:EK}
\end{equation}
The positive elements of the mass/inertia matrix $\MF{\bm{E}}$ are
\[ I^M = \rho_0^{-1} \int_{\mhat{B}} \MF{\rho}(x,y) \, \D x \D y \ \ \mbox{and} 
\ \ I^M_2 = \rho_0^{-1} \int_{\mhat{B}} \MF{\rho}(x,y) \Big[ \left( x - x^{(0)}
\right)^2 + \left( y - y^{(0)} \right)^2 \Big] \D x \D y ,
\]
where $\MF{\rho}(x,y) \geq 0$ is the density distribution within the body and
$\rho_0 > 0$ is the constant density of water. The first term on the right-hand side
of \eqref{eq:5} is due to the hydrodynamic pressure, whereas the second one is
related to the buoyancy (see, for example, \cite{John1}). The non-zero elements
of the matrix $\bm{K}$ are
\begin{gather*}
I^D = \int_D \D x > 0, \quad I^D_x = \int_D \big( x - x^{(0)} \big) \D x , \\
I^D_{xx} = \int_D \big( x - x^{(0)} \big)^2 \D x > 0 , \quad I^S_y = \int_S \big( y
- y^{(0)} \big) \D x \, \D y .
\end{gather*}
It should be noted that the matrix $\bm{K}$ is symmetric.

First, we suppose that $\nu \sigma \in [0, 1)$, which restricts the range of
frequencies because $\sigma$ is a given constant. Under this assumption, the problem
is studied in \S\S 3 and 4; the case when $\nu \sigma \geq 1$ is considered in \S 5.
For $\nu \sigma \in [0, 1)$ the boundary condition \eqref{eq:2} is equivalent to
\begin{equation}
 \pd{y} \varphi = \nu^{(\sigma)} \varphi \quad \mbox{on} \ F , \quad \mbox{where}
 \ \nu^{(\sigma)} = \nu / ( 1 - \nu \sigma ) > \nu . 
\label{eq:2'}
\end{equation}
This boundary condition is of the same form as in the case of the open water when
the coefficient $\nu$ stands on the right-hand side instead of $\nu^{(\sigma)}$. The
parameter $\nu$ has the following expression in terms of $\nu^{(\sigma)}$:
\[ \nu = \nu^{(\sigma)} / \big( 1 + \sigma \nu^{(\sigma)} \big) < \nu^{(\sigma)} . \]

As in the problem describing waves on the open water, it is natural to complement
the Laplace equation \eqref{eq:1} and the boundary condition \eqref{eq:2'} by the
following radiation condition (it means that the potential \eqref{eq:ansatz}
describes outgoing waves):
\begin{equation}
 \int_{W\cap\{|x|=b\}} \bigl| \pd{|x|} \varphi - \ii \nu^{(\sigma)} \varphi \bigr|^2\,
 \D{}s = \MF{o}(1) \quad \mbox{as} \ b \to \infty .
 \label{eq:6}
\end{equation}
In relations \eqref{eq:4}, \eqref{eq:5}, \eqref{eq:2'} and \eqref{eq:6}, $\omega$ is
a spectral parameter (into \eqref{eq:2'} and \eqref{eq:6} it is involved through
$\nu^{(\sigma)}$), which is sought together with the eigenvector
$(\varphi,\vec{\chi})$.

Since $W$ is a Lipschitz domain and $\varphi \in \MF{H^1_{loc}}(W)$, it is natural
to understand the problem, namely relations \eqref{eq:1}, \eqref{eq:4} and
\eqref{eq:2'}, in the sense of the integral identity
\begin{equation}
 \int_{W} \nabla \varphi \nabla \psi \,\D x \D{}y = \nu^{(\sigma)}
 \int_{F} \varphi \, \psi \, \D x + \omega \int_{S} \psi \, 
 \vec{N}^\transp \vec{\chi} \, \D{}s ,
\label{eq:intid}
\end{equation}
which must hold for an arbitrary smooth $\psi$ having a compact support in
$\overline W$, whereas the remaining conditions \eqref{eq:3} and \eqref{eq:6}
specify the behaviour of $\varphi$ at infinity.

The problem formulated above must be augmented by the following subsidiary
conditions concerning the equilibrium position (see \cite{John1}):

\vspace{1mm}

\noindent $\bullet$ $I^M = \int_B \D x \D{}y$ (Archimedes' law\,---\,the mass of the
displaced liquid is equal to that of the body); 

\noindent $\bullet$ $\int_B \bigl(x - x^{(0)} \bigr) \, \D x \D{}y = 0$ (the centre
of buoyancy lies on the same vertical line as the centre of mass); 

\noindent $\bullet$ the $2 \times 2$ matrix $\bm{K}'$ that stands in the lower right
corner of $\bm{K}$ is positive definite.

\vspace{1mm}

\noindent The last of these requirements yields the stability of the body
equilibrium position, which is understood in the classical sense that any
instantaneous infinitesimal disturbance causes the position changes which remain
infinitesimal, except for purely horizontal drift, for all subsequent times.

\section{Equipartition of energy}

It is known (see, for example, \cite[\S\,2.2.1]{LWW}), that a potential
satisfying relations \eqref{eq:1}, \eqref{eq:3}, \eqref{eq:2'} and \eqref{eq:6} has
the same asymptotic representation at infinity as Green's function, namely
\begin{eqnarray}
&& \ \ \ \ \ \ \ \ \MF{\varphi} (x, y) = \MF{A}_\pm (y) \, \E{\ii \nu^{(\sigma)}
|x|} + \MF{r}_\pm (x, y) , \nonumber \\ && |r_\pm|^2, \, |\nabla r_\pm| = \MF{O}
\bigl(
 [x^2 + y^2]^{-1} \bigr) \ \mbox{as} \ x^2 + y^2 \to \infty . \label{eq:uas}
\end{eqnarray}
Moreover, the following equality holds for the coefficients
\begin{equation}
 \nu^{(\sigma)} \int_{-\infty}^0 \left( |\MF{A}_+ (y)|^2 + |\MF{A}_- (y)|^2 \right)
 \D{} y = - {\rm Im} \int_S \overline{\varphi}\,\pd{\vec{n}} \varphi \, \D{}s .
\label{eq:ener}
\end{equation}

Assuming that $\bigl( \MF{\varphi} , \vec{\chi} \bigr)$ is a solution of the problem
formulated in section~2, we rearrange the last formula by virtue of the coupling
conditions \eqref{eq:4} and \eqref{eq:5}, thus obtaining (see details in \cite{K1},
\S 3):
\begin{equation}
 \nu^{(\sigma)} \int_{-\infty}^0 \left( |\MF{A}_+ (y)|^2 + |\MF{A}_- (y)|^2 \right)
 \D{} y = {\rm Im} \Bigl\{ \omega^2 \, \overline{\vec{\chi}}^\transp \bm{E} \vec{\chi}
 - g \overline{\vec{\chi}}^\transp \bm{K} \vec{\chi} \Bigr\} .
\label{eq:ener'}
\end{equation}
In the same way as in \cite{KM,KM1}, this yields the following assertion about the
kinetic and potential energy of the water motion.

\begin{proposition}\label{propos:1}
Let\/ $\bigl( \MF{\varphi} , \vec{\chi} \bigr)$ be a solution of problem\/
\eqref{eq:1}--\eqref{eq:5}, then
\begin{equation}
 \int_W |\nabla \varphi|^2\,\D{}x \, \D{}y < \infty \quad \mbox{and} \quad
 \nu^{(\sigma)} \int_F |\varphi|^2 \, \D{}x < \infty \, , \label{eq:finenerg}
\end{equation}
that is, $\varphi \in H^1 (W)$. Moreover, the following equality holds:
\begin{equation}
 \int_W |\nabla \varphi|^2\,\D{}x \,\D{}y + \omega^2 \overline{\vec{\chi}}^\transp
 \bm{E} \vec{\chi} = \nu^{(\sigma)} \int_F |\varphi|^2 \, \D{}x + g \,
 \overline{\vec{\chi}}^\transp \bm{K} \vec{\chi} .
\label{eq:lagrange}
\end{equation}
\end{proposition}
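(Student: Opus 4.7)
The plan is to exploit the reality and symmetry of the matrices $\bm{E}$ and $\bm{K}$ to kill the radiation at infinity, and then derive the Lagrange-type identity by testing the weak formulation \eqref{eq:intid} against $\overline{\varphi}$. First I would observe that since $\bm{E}$ and $\bm{K}$ have real entries and are symmetric, the Hermitian forms $\overline{\vec{\chi}}^\transp \bm{E} \vec{\chi}$ and $\overline{\vec{\chi}}^\transp \bm{K} \vec{\chi}$ are real for every $\vec{\chi} \in \Cb^3$. Hence the imaginary part on the right-hand side of \eqref{eq:ener'} vanishes, and since the left-hand side is a non-negative multiple of $\nu^{(\sigma)} > 0$, we obtain $A_\pm(y) \equiv 0$. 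Inserting this into the asymptotic decomposition \eqref{eq:uas} leaves only the remainder terms, so that $|\varphi|^2 = O([x^2+y^2]^{-1})$ and $|\nabla \varphi| = O([x^2+y^2]^{-1})$ as $x^2+y^2 \to \infty$. These decay rates immediately yield $\int_W |\nabla\varphi|^2\,\D x\,\D y < \infty$, and on $F$ (where $y=0$) they give $|\varphi|^2 = O(x^{-2})$, so $\int_F |\varphi|^2\,\D x < \infty$; this establishes \eqref{eq:finenerg} and $\varphi \in H^1(W)$.

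For the energy equality \eqref{eq:lagrange}, I would use $\psi = \overline{\varphi}$ in \eqref{eq:intid}. Since $\overline{\varphi}$ lacks compact support, this requires a truncation argument: apply \eqref{eq:intid} with $\psi = \zeta_R \overline{\varphi}$ for a smooth cut-off $\zeta_R$ supported in $|x| < R$, and then let $R \to \infty$. By the decay from the previous step, the boundary contribution on $W \cap \{|x| = R\}$ is $O(R^{-2})$ and disappears in the limit, yielding
$$\int_W |\nabla\varphi|^2 \,\D x\, \D y = \nu^{(\sigma)} \int_F |\varphi|^2 \,\D x + \omega \int_S \overline{\varphi}\,\vec{N}^\transp \vec{\chi}\,\D s .$$
To dispose of the boundary integral on $S$, I would take the complex conjugate of \eqref{eq:5}, multiply on the left by $\vec{\chi}^\transp$, and use the reality and symmetry of $\bm{E}$ and $\bm{K}$ to identify $\vec{\chi}^\transp \bm{E} \overline{\vec{\chi}} = \overline{\vec{\chi}}^\transp \bm{E} \vec{\chi}$ and analogously for $\bm{K}$. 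This produces $\omega \int_S \overline{\varphi}\,\vec{N}^\transp \vec{\chi}\,\D s = g\,\overline{\vec{\chi}}^\transp \bm{K} \vec{\chi} - \omega^2\,\overline{\vec{\chi}}^\transp \bm{E} \vec{\chi}$, and substitution gives \eqref{eq:lagrange}.

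The main technical obstacle is the truncation step: one must carefully verify that the cut-off version of $\overline{\varphi}$ is an admissible test function in \eqref{eq:intid} and that all far-field boundary corrections indeed vanish as $R \to \infty$. Both points rest on the quantitative decay of $\varphi$ and $\nabla\varphi$ furnished by the vanishing of $A_\pm$ in the first step, so the verification is routine but requires care with traces on the Lipschitz boundary of $W$.
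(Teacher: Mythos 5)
Your proposal is correct and follows essentially the same route as the paper: the reality of the quadratic forms $\overline{\vec{\chi}}^\transp \bm{E} \vec{\chi}$ and $\overline{\vec{\chi}}^\transp \bm{K} \vec{\chi}$ forces the right-hand side of \eqref{eq:ener'} to vanish, hence $A_\pm \equiv 0$, the decay of the remainder in \eqref{eq:uas} gives \eqref{eq:finenerg}, and Green's identity (the weak formulation tested against $\overline{\varphi}$) combined with \eqref{eq:5} yields \eqref{eq:lagrange}. This is exactly the argument the authors invoke by reference to their earlier papers.
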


Here the kinetic energy of the water/body system stands on the left-hand side,
whereas we have the potential energy of the coupled motion on the right-hand side.
The latter takes into account that the water is covered by the brash ice, and so the
last formula generalises the energy equipartition equality valid when a body is
freely floating in the open water.

Proposition 1 shows that if $( \varphi, \vec{\chi} )$ is a solution with
complex-valued components, then its real and imaginary parts separately satisfy the
problem. This allows us to consider $( \varphi, \vec{\chi} )$ as an element of the
real product space $H^1 (W) \times \mathbb{R}^3$ in what follows; an equivalent norm
in $H^1 (W)$ is defined by the sum of two quantities \eqref{eq:finenerg}.

\begin{definition}\rm
Let the subsidiary conditions concerning the equilibrium position (see \S~2) hold
for the freely floating body $\mhat{B}$. A non-trivial real solution $( \varphi,
\vec{\chi} ) \in H^1 (W) \times \mathbb{R}^3$ of the problem \eqref{eq:intid},
\eqref{eq:5} and \eqref{eq:6} is called a {\it mode trapped}\/ by this body, whereas
the corresponding value of $\omega$ is referred to as a {\it trapping frequency}.
\end{definition}

\section{Motionless bodies trapping waves \\ in the presence of the brash ice}

In \cite{NGK10}, a semi-inverse procedure was used for the construction of
motionless bodies freely floating in the open water of infinite depth and trapping a
two-dimensional mode at the frequency $\omega$ given arbitrarily. The idea of the
procedure is to seek bodies for a prescribed trapped mode. Subsequently, this
approach was developed in \cite{KM1,KM2}, where various axisymmetric trapping
bodies, motionless and heaving, were constructed as well as sets of multiple bodies
some of which are motionless, whereas the others heave. In this section, we first
apply the same idea to construct a family of motionless bodies trapping waves in the
water covered by the brash ice and then consider the effect of the brash ice by
comparing these bodies with those trapping waves in the open water.

\subsection{Construction of a family of motionless trapping bodies}

Let $\omega > 0$ be fixed so that $\nu = \omega^2 / g$ satisfies the inequality $\nu
< \sigma^{-1}$, where the constant $\sigma > 0$ is given, thus restricting the range
of admissible frequencies. In order to obtain a family of immersed parts of
motionless bodies that trap waves in the water covered by the brash ice we follow
the considerations used in \S\S 3 and 4 of \cite{NGK10} with $\nu$ changed to
$\nu^{(\sigma)} = \nu / ( 1 - \nu \sigma )$.

Let us consider the following motionless mode $(\varphi_0, (\bm{0}, 0)^\transp)$,
where $\bm{0} \in \mathbb{R}^2$ denotes the zero-displacement vector, $0$ stands for
the zero angle of rotation and
\[ \varphi_0 (x,y) = g^2 \phi_0 (\nu^{(\sigma)} x, \nu^{(\sigma)} y) / \omega^3 \, , \]
whereas the non-dimensional velocity potential $\phi_0$ is as follows:
\begin{equation}
\phi_0 (\nu^{(\sigma)} x, \nu^{(\sigma)} y) = \int_0^\infty \frac{k \, \E{k
\nu^{(\sigma)} y}}{k-1} \left[ \sin k (\nu^{(\sigma)} x - \pi) - \sin k
(\nu^{(\sigma)} x + \pi) \right] \, \D k \, .
\label{eq:phi0}
\end{equation}
Here, the integral is understood as usual improper integral because its integrand is
bounded; indeed, the location of zeroes coincides for the denominator and numerator.
Moreover, we have that
\[ \phi_0 (\nu^{(\sigma)} x, \nu^{(\sigma)} y) = \left( 2 \, \nu^{(\sigma)} \right)^{-1}
\left[ G_x (x,y; - \pi / \nu^{(\sigma)} , 0) - G_x (x,y; \pi / \nu^{(\sigma)}, 0)
\right] ,
\]
where $G (x,y; \xi,\eta)$ is Green's function of the time-harmonic water-wave
problem (see \cite{LWW}, \S 1.2.1, where the properties of $G$ are described).
Therefore,
\begin{equation}
\partial_y \phi_0 - \nu^{(\sigma)} \phi_0 = 0 \quad \mbox{for} \ y=0 \ \mbox{and} \
\nu^{(\sigma)} x \neq \pm \pi ,
\label{eq:F0}
\end{equation}
and
\begin{eqnarray}
&& \phi_0 (x,y) = O \left( [ x^2 + y^2]^{-1} \right) , \nonumber \\ && |\nabla
\phi_0 (x,y)| = O \left( [x^2 + y^2]^{-3/2} \right)
\label{eq:inf0}
\end{eqnarray}
as $x^2 + y^2 \to \infty$. These estimates yield that the kinetic and potential
energy is finite for $\phi_0$ in every domain away from the singularities of this
function.

The next crucial point of the inverse procedure is to use streamlines corresponding
to the velocity potential \eqref{eq:phi0} in order to define two immersed contours
of a freely floating trapping body that is symmetric about the $y$-axis and has
$x^{(0)} = 0$. The last property is guaranteed by a proper choice of density
distribution which is always possible for a symmetric body. Moreover, it is possible
to choose this distribution so that $y^{(0)}$ is sufficiently close to the ordinate
of the lowest points of the body, thus yielding that all of the subsidiary
conditions hold.

Taking a harmonic conjugate to $\phi_0$ in the form
\begin{equation}
\psi_0 (\nu^{(\sigma)} x, \nu^{(\sigma)} y) = \int_0^\infty \frac{k \, \E{k
\nu^{(\sigma)} y}}{k-1} \left[ \cos k (\nu^{(\sigma)} x - \pi) - \cos k
(\nu^{(\sigma)} x + \pi) \right] \, \D k \, ,
\label{eq:psi0}
\end{equation}
we see that this stream function decays at infinity. Let us list several other
properties of $\psi_0$ (see their proof in \cite{LWW}, pp.~178--179) used for
construction of a family of bodies trapping the mode $(\varphi_0, (\bm{0},
0)^\transp)$. Since $\psi_0$ is an odd function of $x$, we formulate these
properties only for $x > 0$.

\begin{figure}
\begin{center}
\SetLabels
 \L (0.555*0.66) $\nu^{(\sigma)} x$ \\
 \L (0.525*0.94) $\nu^{(\sigma)} y$ \\
 \L (0.244*0.46) ballast \\
 \L (0.170*0.46) ${\cal S}_-$ \\
 \L (0.708*0.46) ballast \\
 \L (0.834*0.46) ${\cal S}_+$ \\
\endSetLabels \leavevmode
\strut\AffixLabels{\includegraphics[width=120mm]{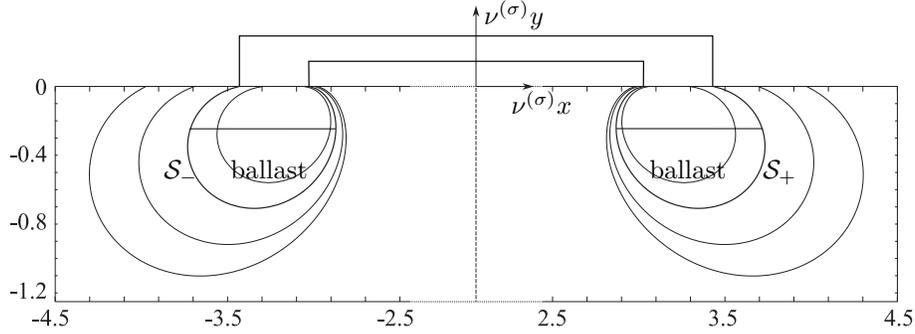}}
\end{center}
\vspace{-2mm} \caption{Level lines of the stream function $\psi_0$ plotted in
non-dimensional coordinates. An example of the wetted contours bounding a motionless
trapping body is shown in bold and denoted by ${\cal S}_-$ and ${\cal S}_+$; other
pairs of symmetric streamlines can also be considered as boundaries of immersed
parts of trapping bodies. In order to contract the width of the figure, a reduced
horizontal scale is applied on the interval (-2.5, 2.5).} \vspace{-2mm}
\label{fig:examples}
\end{figure}

The trace $\psi_0 (\nu^{(\sigma)} x,0)$ has only one positive zero $\nu^{(\sigma)}
x_0 \in \big( \frac{2 \pi}{3} , \pi \big)$ and
\[ \psi_0 (\nu^{(\sigma)} x,0) < 0 \quad \mbox{on} \ (0, \nu^{(\sigma)} x_0) .
\]
Moreover, this function increases monotonically from $0$ to $+\infty$ on $\left[
\nu^{(\sigma)} x_0, \pi \right)$ and decreases monotonically from $+\infty$ to $0$
on $\left( \pi , +\infty \right)$. Therefore, for every non-dimensional $d > 0$ the
streamline ${\cal S}_+ = \big\{ (\nu^{(\sigma)} x, \nu^{(\sigma)} y) \in
\overline{\mathbb{R}^2_-} : \, x > 0, \, y < 0, \, \psi_0 (\nu^{(\sigma)} x,
\nu^{(\sigma)} y) = d \big\}$ (see the right part of figure~2) connects two points
that lie on the positive $\nu^{(\sigma)} x$-axis on either side of the point $(\pi,
0)$ at which $\psi_0$ is infinite.

Thus, for the chosen $\omega$ and every $d > 0$ the domain $B_+$ between
\[ S_+ = \left\{ (x,y) \in \overline{\mathbb{R}^2_-} : (\nu^{(\sigma)} x, 
\nu^{(\sigma)} y) \in {\cal S}_+ \right\}
\]
and the $x$-axis serves as the right immersed part of a single motionless trapping
body $\mhat{B}$ obtained by connecting $B_+$ with the symmetric domain $B_-$ as
shown in figure~2 for the corresponding domains in non-dimensional variables.
Indeed, the Cauchy--Riemann equations imply that $\varphi_0$ satisfies the
homogeneous Neumann condition on $S = S_- \cup S_+$, and so the coupling condition
\eqref{eq:4} is fulfilled for the mode $(\varphi_0 , (\bm{0}, 0)^\transp)$. The
coupling condition \eqref{eq:5} is also fulfilled for this mode because it reduces
to the equality
\[ \int_{S_- \cup S_+} \varphi_0 \, n_y \, \D s = 0 \, .
\]
Indeed, the other two equalities are trivial by the symmetry of $S_- \cup S_+$ and
the fact that the corresponding integrands are odd functions of $x$. The proof that
the last integral vanishes is based on the second Green's formula and the asymptotic
formulae \eqref{eq:inf0} (see details in \cite{NGK10}).

\subsection{The effect of the brash ice on motionless trapping bodies}

The procedure described in \S 4.1 rigorously establishes the existence of trapping
bodies, but it does not present pictorially the effect of the brash ice, that is,
how these bodies differ from those trapping waves in the open water. In order to do
this, we apply rescaling of specially chosen contours from the non-dimensional
coordinates $(\nu^{(\sigma)} x, \nu^{(\sigma)} y)$ to $(\nu x, \nu y)$ for which
purpose the equality $\nu = \nu^{(\sigma)} / ( 1 + \sigma \nu^{(\sigma)})$ serves.
Rescaling can be realised in two different manners\,---\,by keeping the area of the
immersed part fixed and by keeping the length of $D$ fixed. The corresponding
results are shown in figures 3 and 4 respectively, where three values of frequency
are considered to demonstrate the effect; the frequency is characterised by the
product $\nu \sigma$.

In figures 3 and 4 (a)--(c), examples of symmetric bodies trapping waves in the
water covered by the brash ice are shown. The bodies are represented by the
boundaries of their right immersed parts; see the left contour in each figure (cf.
figure~2, where an example of the whole body is shown in the non-dimensional
coordinates $(\nu^{(\sigma)} x, \nu^{(\sigma)} y)$). The location of the body that
traps waves in the open water at the same frequency is indicated for comparison; it
is plotted in bold (see the contour on the right-hand side of each figure). Both
bodies have either the same area of the immersed part (the scaling adopted in figure
3) or the same length of $D$ (the scaling adopted in figure 4). The dashed lines in
figures 3 (b) and 4 (b) show the contours from which the left ones are obtained by
the scaling $(\nu^{(\sigma)} x, \nu^{(\sigma)} y) \mapsto (\nu x, \nu y)$.

Thus, figures 3 (a)--(c) demonstrate the following properties of the obtained
symmetric trapping bodies with the fixed immersed area. The spacing between two
immersed parts is less for bodies that trap waves in the water covered by the brash
ice than for bodies that trap waves in the open water. Moreover, this spacing
decreases as the frequency of trapped waves increases, whereas the length of $D$, on
the opposite, increases together with the frequency.

The properties of symmetric trapping bodies with the fixed length of $D$ are as
follows; see figures 4 (a)--(c). The spacing between two immersed parts demonstrates
the same behaviour as in the other case, whereas the immersed area decreases as the
frequency increases.
 
\begin{figure}
\vspace{2mm}
\begin{center}
\SetLabels
 \L (0.88*-0.011) $\nu x$ \\
 \L (0.0*0.22) $\nu y$ \\
 \L (0.88*0.34) $\nu x$ \\
 \L (0.0*0.57) $\nu y$ \\
 \L (0.88*0.695) $\nu x$ \\
 \L (0.0*0.925) $\nu y$ \\
 \L (-0.04*0.98) $(a)$ \\
 \L (-0.04*0.627) $(b)$ \\
 \L (-0.04*0.278) $(c)$ \\
\endSetLabels \leavevmode
\strut\AffixLabels{\includegraphics[width=120mm]{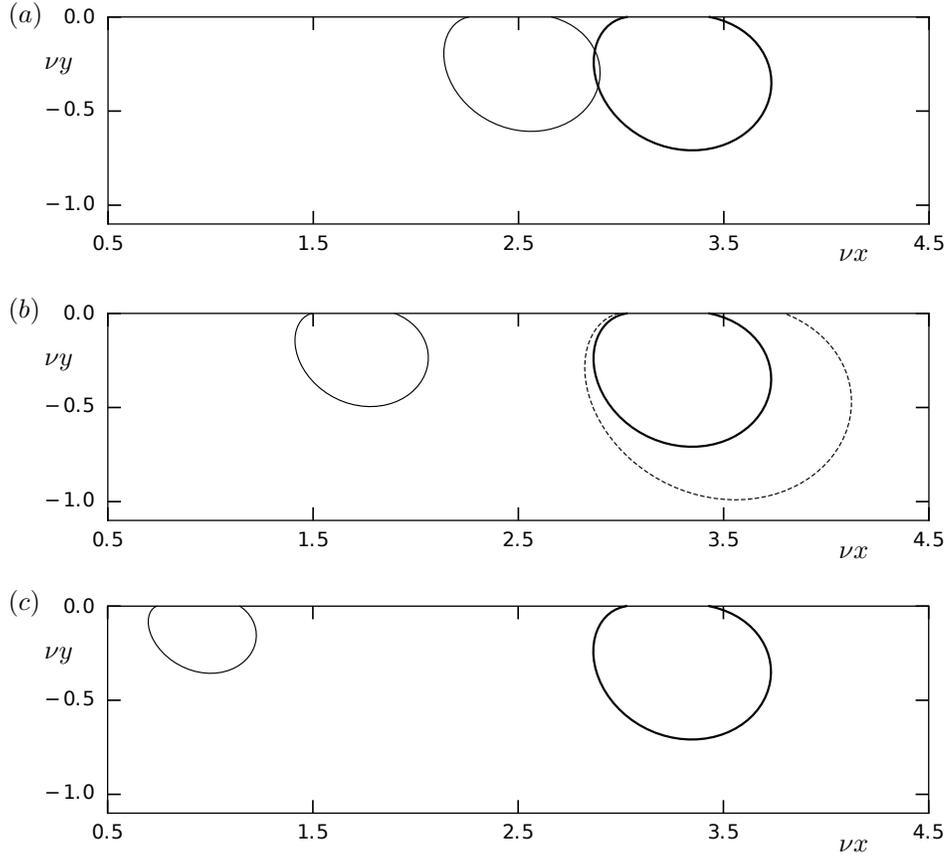}}
\end{center}
\vspace{2mm} \caption{Examples of bodies trapping waves in the water covered by the
brash ice for $\nu \sigma = 1/4$ (a); $\nu \sigma = 1/2$ (b); $\nu \sigma = 3/4$
(c). In each figure, the left contour is the right immersed part of a symmetric body
trapping waves; the bold contour shows the location of the body that has the same
length of $D$ and traps waves in the open water at the same frequency. The dashed
line in figure (b) shows the contour from which the left one is obtained by the
scaling $(\nu^{(\sigma)} x, \nu^{(\sigma)} y) \mapsto (\nu x, \nu y)$.}
\vspace{-2mm}
\label{fig:contours2}
\end{figure}

\begin{figure}
\vspace{2mm}
\begin{center}
\SetLabels
 \L (0.95*0.0) $\nu x$ \\
 \L (-0.01*0.255) $\nu y$ \\
 \L (0.95*0.34) $\nu x$ \\
 \L (-0.01*0.595) $\nu y$ \\
 \L (0.95*0.682) $\nu x$ \\
 \L (-0.01*0.94) $\nu y$ \\
 \L (-0.06*0.98) $(a)$ \\
 \L (-0.06*0.634) $(b)$ \\
 \L (-0.06*0.295) $(c)$ \\
\endSetLabels \leavevmode
\strut\AffixLabels{\includegraphics[width=120mm]{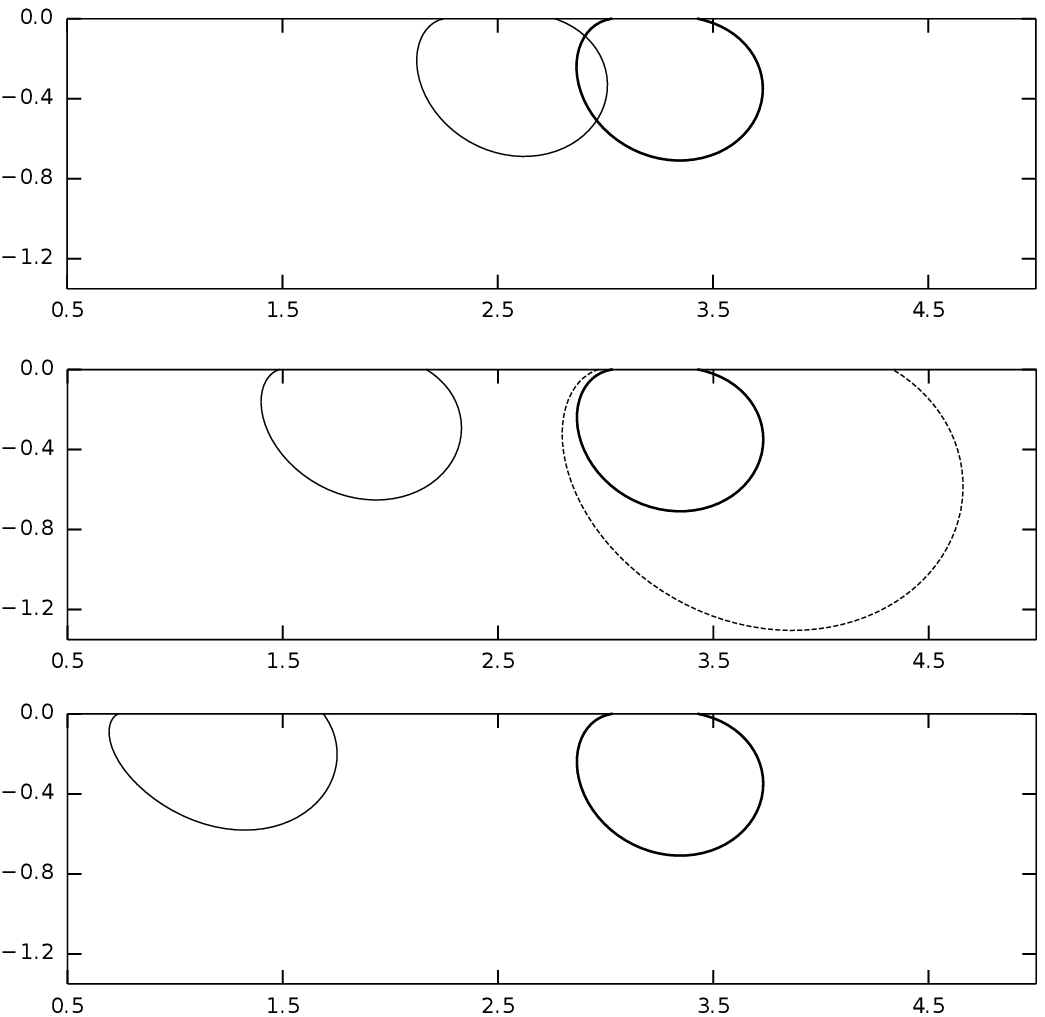}}
\end{center}
\vspace{1mm} \caption{Examples of bodies trapping waves in the water covered by the
brash ice for $\nu \sigma = 1/4$ (a); $\nu \sigma = 1/2$ (b); $\nu \sigma = 3/4$
(c). In each figure, the left contour is the right immersed part of a symmetric body
trapping waves; the bold contour shows the location of the body that has the same
area and traps waves in the open water at the same frequency. The dashed line in
figure (b) shows the contour from which the left one is obtained by the scaling
$(\nu^{(\sigma)} x, \nu^{(\sigma)} y) \mapsto (\nu x, \nu y)$.} \vspace{-3mm}
\label{fig:contours1}
\end{figure}

\section{The effect of the brash ice when $\bm{\nu \sigma \geq 1}$}

Let $\mhat{B}$ be a freely floating body, that is, it satisfies all subsidiary
conditions of \S 2. For the matrices $\bm{E}$ and $\bm{K}$ corresponding to
$\mhat{B}$ (see \eqref{eq:EK} and the following formulae that define the matrices'
elements), we denote by $\lambda_*$ the largest $\lambda$ such that $\det( \lambda
\bm{E} - \bm{K}) = 0$.

\begin{proposition} 
Let $( \varphi, \vec{\chi} ) \in H^1 (W) \times \mathbb{R}^3$ be a solution of
problem \eqref{eq:1}--\eqref{eq:5}. If $\omega$ is such that $\nu \geq \max \{
\lambda_*, \sigma^{-1} \}$, where $\lambda_*$ corresponds to $\mhat{B}$ through
$\bm{E}$ and $\bm{K}$, then $( \varphi, \vec{\chi} )$ is a trivial solution.
\end{proposition}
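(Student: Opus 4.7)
The plan is to establish an energy identity analogous to \eqref{eq:lagrange} but valid in the regime $\nu\sigma \geq 1$, and then to exploit the sign structure forced by the two hypotheses $\nu \geq \sigma^{-1}$ and $\nu \geq \lambda_*$ so that every term of the identity is non-negative and each must vanish.

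For the sub-case $\nu\sigma > 1$, the boundary condition \eqref{eq:2} is equivalent to $\partial_y\varphi = -\mu\,\varphi$ on $F$ with $\mu = \nu/(\nu\sigma - 1) > 0$. Multiplying \eqref{eq:1} by $\varphi$ and applying Green's identity (the contribution at infinity vanishes because $\varphi \in H^1(W)$), then substituting $\partial_n\varphi = \omega\,\vec\chi^{\transp}\vec N$ from \eqref{eq:4}, yields
\[
\int_W |\nabla\varphi|^2\,dx\,dy + \mu\int_F \varphi^2\,dx \;=\; \omega\int_S \varphi\,\vec{N}^{\transp}\vec\chi\,ds.
\]
Taking the scalar product of \eqref{eq:5} with $\vec\chi$ rewrites the right-hand side as $g\,\vec\chi^{\transp}\bm{K}\vec\chi - \omega^2\,\vec\chi^{\transp}\bm{E}\vec\chi$, producing
\[
\int_W |\nabla\varphi|^2\,dx\,dy + \mu\int_F \varphi^2\,dx + g\,\vec\chi^{\transp}(\nu\bm{E}-\bm{K})\vec\chi \;=\; 0.
\]
The boundary sub-case $\nu\sigma = 1$ reduces \eqref{eq:2} to $\varphi \equiv 0$ on $F$, and an identical calculation yields the same identity with the middle term absent.

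Each term on the left is non-negative: the first two manifestly, and the third by the hypothesis $\nu \geq \lambda_*$. Indeed, simultaneous diagonalization of the positive-definite $\bm{E}$ and the symmetric $\bm{K}$ shows that the generalized eigenvalues of $\bm{K}\vec\chi = \lambda\bm{E}\vec\chi$ are $0, \lambda_1, \lambda_*$, so $\vec\chi^{\transp}(\nu\bm{E}-\bm{K})\vec\chi \geq 0$ for all $\vec\chi$ whenever $\nu \geq \lambda_*$, strictly so for $\vec\chi \neq \vec 0$ when $\nu > \lambda_*$. The sum being zero then forces each term to vanish: $\nabla\varphi \equiv 0$ in $W$ and $\varphi = 0$ on $F$ (either from the $\mu$-term or directly when $\nu\sigma = 1$), hence by connectedness $\varphi \equiv 0$ throughout $W$. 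When $\nu > \lambda_*$ the matrix $\nu\bm{E}-\bm{K}$ is positive definite, giving $\vec\chi = \vec 0$ at once.

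The main obstacle is the equality case $\nu = \lambda_*$, where $\nu\bm{E}-\bm{K}$ has a non-trivial kernel and the energy identity alone only confines $\vec\chi$ to that kernel. To close the argument one must combine this with \eqref{eq:4}: since $\varphi \equiv 0$ in $W$ implies $\partial_n\varphi = 0$ on $S$, one obtains $\vec\chi^{\transp}\vec N = 0$ almost everywhere on $S$, and one then has to show, using the geometry of the wetted contour and the subsidiary stability conditions of \S 2, that this rigid-motion constraint together with $(\nu\bm{E}-\bm{K})\vec\chi = \vec 0$ admits only $\vec\chi = \vec 0$. A minor bookkeeping point is that the weak form \eqref{eq:intid} was written using $\nu^{(\sigma)}$ and thus presupposes $\nu\sigma < 1$; the identity above must be set up directly from \eqref{eq:2}, paying attention to the sign of the $F$-coefficient, but no genuinely new analytic input is required beyond the familiar Green identity.
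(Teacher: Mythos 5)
Your argument is essentially the paper's own proof: the same Green's-identity energy balance, the same observation that the $F$-term changes sign (via $\nu^{(\sigma)}<0$, i.e.\ your $\mu>0$) once $\nu\sigma>1$ or degenerates to $\varphi=0$ on $F$ when $\nu\sigma=1$, and the same use of \eqref{eq:4}--\eqref{eq:5} to turn the $S$-integral into $-g^{-1}\vec{\chi}^\transp(\nu\bm{E}-\bm{K})\vec{\chi}$, which is controlled by $\nu\ge\lambda_*$; you merely phrase it as a sum of non-negative terms equal to zero rather than as a contradiction between a positive left-hand side and a non-positive right-hand side. The equality case $\nu=\lambda_*$ that you flag for $\vec{\chi}$ is a real loose end, but the paper's proof does not treat it either (it stops once the triviality of $\varphi$ is forced), so you are, if anything, more explicit about what remains there.
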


\begin{proof}
Let $R_{\alpha, \beta} = \{ (x, y) \in \mathbb{R}^2_- : |x| < \alpha , \ y > -\beta
\}$, where $\alpha, \beta > 0$ are sufficiently large; for example, such that $B
\subset R_{\alpha-1, \beta-1}$. Applying the first Green's formula, we write
\begin{eqnarray}
\int_{W \cap R_{\alpha, \beta}} \!\!\!\!\!\!\!\! | \nabla \varphi |^2 \, \D x \D y =
\int_S \varphi \, \pd{\vec{n}} \varphi \, \D s + \int_{-\alpha}^\alpha [ \varphi (x,
y) \, \pd{y} \varphi  (x, y)]_{y=-\beta}^{y=0, x \in F} \, \D x \nonumber \\ +
\int_{-\beta}^0 [ \varphi (x, y) \, \pd{x} \varphi (x, y) ]_{x=-\alpha}^{x=\alpha}
\, \D y , \label{eq:fgf}
\end{eqnarray}
where the Laplace equation is taken into account. The assumptions imposed on
$\varphi$ yield that: 1) there exists a sequence $\{ \alpha_k \}_{k=1}^{\infty}$
such that $\alpha_k \to \infty$ as $k \to \infty$ and the last integral with $\beta
= \infty$ and $\alpha = \alpha_k$ tends to zero as $k \to \infty$; 2) for all
$\alpha > 0$ we have that $\int_{-\alpha}^\alpha [ \varphi (x, -\beta) \, \pd{y}
\varphi (x, -\beta)] \, \D x \to 0$ as $\beta \to \infty$. Passing to the limit in
\eqref{eq:fgf}, first as $\beta \to \infty$ and then as $\alpha_k \to \infty$, we
obtain
\begin{equation}
\int_{W} | \nabla \varphi |^2 \, \D x \D y - \int_F \varphi (x, 0) \, \pd{y} \varphi
(x, 0) \, \D x = \int_S \varphi \, \pd{\vec{n}} \varphi \, \D s .
\label{eq:lim}
\end{equation}
Here the integral over $F$ vanishes when $\nu = \sigma^{-1}$ (in this case, the
boundary condition \eqref{eq:2} reduces to $\varphi = 0$ on $F$); if $\nu >
\sigma^{-1}$, this integral is equal to $\nu^{(\sigma)} \int_F [\varphi (x, 0)]^2
\, \D x$, which is a consequence of \eqref{eq:2'}. In both cases, the left-hand side
of \eqref{eq:lim} is positive for a non-trivial $\varphi$ (in the second case,
because $\nu^{(\sigma)} < 0$).

On the other hand, it follows from \eqref{eq:4} and \eqref{eq:5} that
\[ \int_S \varphi \, \pd{\vec{n}} \varphi \, \D s = \omega \vec{\chi}^\transp 
\int_S \varphi \, \vec{N} \, \D s = - g^{-1} \vec{\chi}^\transp [\nu \bm{E} -
\bm{K}] \vec{\chi} \, ,
\]
and so the right-hand side term of \eqref{eq:lim} is non-positive when $\nu \geq
\lambda_*$. Thus, this contradicts to the positivity of the left-hand side with a
non-trivial $\varphi$ when $\nu \geq \max \{ \lambda_*, \sigma^{-1} \}$. This
completes the proof of the proposition.
\end{proof}

\section{Concluding remarks} 

This note extends our previous work on motionless trapping bodies. Here, it has
been shown that there exist two-dimensional bodies with a vertical axis of symmetry
and two immersed parts and the following properties. These bodies are freely
floating but motionless and trap some time-harmonic modes provided their frequencies
are bounded by a constant characterising the brash ice. Thus, there is a coupled
motion of the water covered by this ice that does not radiate waves to infinity in
the presence of such a body. Therefore, in the absence of viscosity, this
oscillation will persist forever.

On the other hand, for every freely floating body there exists a bound for
frequencies (it depends also on the constant that characterises the brash ice) such
that no modes are trapped at the frequencies exceeding this bound. Unlike uniqueness
theorems known hitherto and concerning time-harmonic waves in the presence of freely
floating bodies, this result does not involve any geometric restrictions.

{\small

}

\end{document}